\documentclass{article}
\usepackage[utf8]{inputenc}
\usepackage{graphicx}
\usepackage{xcolor}
\usepackage{float}
\usepackage{array}  
\usepackage{amsthm}
 \usepackage{amsfonts,amssymb}
\usepackage{amsmath}

\def\vt{\vartheta}
\def\vp{\varphi}

\def\a{\alpha}
\def\b{\beta}
\def\ep{\epsilon}

\def\g{\gamma}
\def\d{\delta}
\def\g{\gamma}

\addtolength{\textheight}{3.7cm}
\addtolength{\textwidth}{4cm}
\oddsidemargin-0.5cm
\evensidemargin-0.5cm
\topmargin-0.5cm

\theoremstyle{definition}
\newtheorem{definition}{Definition}

\newtheorem{prop}{Proposition}

\begin{document}

\title{Hilbert's energy-momentum tensor  extended}

\author{Yakov Itin\\
Mathematics Department, Jerusalem College of Technology; \\Jerusalem, Israel}
\maketitle
\begin{abstract}
A variational derivative of a Lagrangian with regard to the metric tensor is used in classical field models to define Hilbert's energy-momentum tensor for a matter field. In solid-state physics, constitutive relationships between fundamental field variables are a topic that is covered by a broad variety of models. In this context, a constitutive tensor of higher order replaces the of the second-order metric tensor.
For the classical field models of gravity and electrodynamics, a similar premetric description with a linear constitutive relation has recently presented.
In this paper, we analyze the extension of the Hilbert definition of the energy-momentum tensor to models with general linear constitutive law. Differential forms are required for the covariant treatment of integrals on a differential manifold. The Lagrangian, electromagnetic current, and energy-momentum current must all be represented as twisted 4-forms, 3-forms, and vector-valued 3-forms, respectively. For an arbitrary linear map on forms, we derive a commutative variation identity that allows direct variation procedures without having to deal with the individual components.
One can deal with Maxwell-type Lagrangians in any dimension by restricting the linear map to the generalized Hodge dual map (constitutive law).
The Hilbert energy-momentum current, which is defined as a variation derivative of the Lagrangian with regard to a coframe field, is derived in differential form. It is demonstrated that the commutative variation identity is closely connected to the explicit form of the energy-momentum current. This construction is applied to a number of field models having a general linear constitutive law.
\end{abstract}

\section{Introduction}
Variation of a Lagrangian is a central procedure in both classical and quantum field theory. In addition to the field equations, it provides various  conservation quantities, in particular, the energy-momentum tensor.  When the action is expressed as an integral over a manifold $M$
\begin{equation}
    S=\int_M L\sqrt{-g}d^4x.
\end{equation}
Hilbert's energy-momentum tensor for a matter field is defined by a variation derivative expression 
\begin{equation}\label{Hilbert}
    T^{ij}=\frac 2{\sqrt{-g}}\frac{\d L\sqrt{-g}}{\d g_{ij}}.
\end{equation}
Here $L$ denotes a Lagrangian scalar,  $g_{ij}$ is a metric tensor with the determinant $g$.  
For non-gravity models that do not involve metric tensor derivatives, the variation derivative becomes the ordinary partial derivative.  Although its geometrical and physical interpretations, as well as the accompanying conservation law, are obtained only from Noether's technique, the formula (\ref{Hilbert}) is very useful for explicit computation.
Already the leading coefficient $(2/{\sqrt{-g}})$ implies that a more fundamental expression can be hidden in  (\ref{Hilbert}). 

For an integral defined on a differential manifold  \cite{deRham}, \cite{Cartan}, \cite{Spivak}, \cite{Bott}, an appropriate expression for the action functional is expressed by an integral
\begin{equation}\label{lagr}
    S=\int_M{\cal L},
\end{equation}
where the integrand ${\cal L}$ is assumed to be a twisted 4-form---{\it Lagrange density}. When the manifold is considered to be endowed with a coframe field $\vt^\a$ (a set of four linearly independent 1-forms), the metric can be expressed as a tensor product
\begin{equation}\label{metric}
    g=\eta_{\a\b}\vt^\a\otimes\vt^\b,
\end{equation}
where $\eta_{\a\b}={\rm diag}(1,-1,-1,-1)$ is the Minkowsky metric defined on a tangent space. 
Then, instead of considering a variation derivative of the scalar Lagrangian with regard to the metric tensor as in (\ref{Hilbert}), examine a variation of the Lagrangian density (\ref{lagr}) with respect to the coframe field 
\begin{equation}\label{Hilbert-Sigma}
    \Sigma_\a=\frac{\d{\cal L}}{\d\vt^\a}.
\end{equation}
The twisted vector-valued 3-form $\Sigma_\a$ is better suited for space-time integration over hypersurfaces than the tensor $T^{ij}$.  We will refer to $\Sigma_\a$ as {\it Hilbert's energy-momentum form}. This expression was widely used in mathematical physics; see, for example,  \cite{Truesdell}, \cite{Gotay}, \cite{Thirring}, \cite{Frankel}, and the references cited therein. 
The differential-form approach has been demonstrated to be useful in GR and its numerous modifications \cite{Hehl-RMP},    \cite{Mielke}, \cite{MAG}, \cite{Blagojevic-Hehl}. 

In this study, we seek a generalization of Hilbert's energy-momentum tensor for models in which the metric tensor is replaced by a general constitutive tensor. These premetric models are useful for a covariant description of electromagnetism in solid-state physics \cite{Post}, \cite{Kovetz}, \cite{Lindell}. Another relevant subject is classical electrodynamics and gravity with a general linear constitutive law, see \cite{Hehl-Obukhov},   \cite{Itin:2001bp},\cite{Itin:2001xz}, \cite{Itin:2016nxk},
\cite{Itin:2018dru}.

To extend Hilbert's notion to the premetric case, we consider a general constitutive law involving two fields  ${\cal H}$ and ${\cal F}$,  
\begin{equation}
    {\cal H}=\kappa{\cal F}.
\end{equation}
Here $\kappa$ is a pseudo-tensor that can be expressed in term of the basis $\vt^\a$. Then the definition (\ref{Hilbert-Sigma}) is applicable. 
To deal with the variation of the constitutive tensor, we extend the master formula \cite{Muench:1998ay} for a general constitutive map. 
Then we show that Hilbert's energy-momentum current is closely related to this commutative variation formula.

The paper is organized as follows: In Section 2, we present background information on the differential form formalism and variation operator. The general constitutive map is then defined, and its relationship to the variation technique is discussed. We derive the commutative relation for constitutive map variation. 
In Section 3, we apply the commutative relation for a  Maxwell-type model with a general constitutive law. The expression of the energy-momentum current 3-form and its relationship to the standard energy-momentum tensor is derived. The formulae for the trace of the energy-momentum current and its symmetric part are discussed.
In Section 4, we apply the formalism to  viable physics models, such as scalar fields, vacuum electrodynamics, premetric electromagnetism and premetric gravity. In Consequence section, we outline our results and explore some potential extensions. 

Notations: For representation of differential forms, we use a coframe field $\vt^\a$ and its dual frame field $e_\a$ with Greek indices varying in the range $\a,\b,\cdots=0,\cdots,(n-1)$. The coordinate frame is endowed with Roman indices $i,j,\cdots$ from the same range. Everywhere, Einstein's summation rule is assumed.

\section{Variations of differential forms and constitutive map}

In order to deal with the variation of differential forms, we need to make some assumptions about the nature of physical measuring quantities. Take the vector field $\cal A$, which we like to think of as a 1-form. In a holonomic (coordinate) basis $dxi$ and an anholonomic basis $varthetaa$, respectively, this form can be given in two different ways
\begin{equation}
    {\cal A}=A_idx^i=A_\a \vt^\a
\end{equation}
Which quantities—the components $A_i$, the components $A_\a$, or the differential form ${\cal A}$ itself—represent the proper physics field? Since every measurable quantity is actually an integral taken over some portion of the space and some interval of time, the differential form is the most appropriate quantity.
Every differentiable form contains information about both the geometry of the underline space and the physical field.
\subsection{Operations on differential forms} 
We start with a brief account of operators on differential forms. Precise definitions and proofs can be find eg in \cite{deRham}, \cite{Frankel},  \cite{Cartan},\cite{Spivak}, \cite{Bott}. For applications of differential-form formalism in physics, especially in electromagnetism and gravity field models, see \cite{Hehl-Obukhov} and \cite{MAG}.

We consider a differential manifold ${\cal M}$ of  dimension $n$ that is endowed with a coframe field (a set of $n$ smooth independent  1-forms)  $\vt^\a$.  Let $\Omega^p$ denotes the bundle of differential $p$-forms on ${\cal M}$.  The set $\vt^\a$ then serves as a basis for  $\Omega^1$, the set $\vt^\a\wedge\vt^\b$  serves as a basis for $\Omega^2$, and so on.

The {\it exterior product} of differential forms ${\cal A}]\in \Omega^k$ and ${\cal B}\in \Omega^\ell$  is a linear map 
\begin{equation}\label{2.1--1}
    \wedge:\Omega^k \times \Omega^\ell\to\Omega^{k+\ell}, 
\end{equation}
that satisfies the graded commutative law: 
\begin{equation}\label{2.1--2}
    {\cal A}\wedge{\cal B}=(-1)^{k+\ell}{\cal B}\wedge{\cal A}. 
\end{equation}
The {\it interior product} of a vector field $X$ and a differential form ${\cal A}$ is a linear map \footnote{Recall an alternative widely-used notation for the inner product,  $i_X\omega $. To our opinion, the notation  $X\rfloor\omega$ is  more convenient for lengthy expressions.   }  
\begin{equation}\label{2.1--3}
    X\rfloor:\Omega^k \to\Omega^{k-1}
\end{equation}
such that 
\begin{equation}\label{2.1--4}
    X\rfloor(Y\rfloor{\cal A})=-Y\rfloor(X\rfloor{\cal A}),
\end{equation}
and 
\begin{equation}\label{2.1--5}
    X\rfloor({\cal A}\wedge{\cal B})=(X\rfloor{\cal A})\wedge{\cal B}+(-1)^k{\cal A}\wedge(X\rfloor{\cal B}).
\end{equation}
It is convenient to use the dual {\it frame field } $e_\a$ along with a coframe field $\vt^\a$.  In terms of internal product, the duality equation is stated as 
\begin{equation}\label{2.1--6}
    e_\a\rfloor\vt^\b=\d_\a^\b.
\end{equation} 

We distinguish between the ordinary {\it untwisted forms}, which do not change when arbitrary basis transformations are used, and the {\it twisted forms}, which do change under  improper transformations (those with a negative determinant).  For detailed  description of these notions, see  \cite{deRham}.  Then in the standard basis an expression of a $p$-form ${\cal A}$ is presented as, 
\begin{equation}\label{base-express}
    {\cal A}=\frac 1{k!}A_{\a_1\cdots\a_k}\vt^{\a_1}\wedge \cdots\wedge \vt^{\a_k}\,.
\end{equation}
The coefficients $A_{\a_1\cdots\a_p}$ represent here an ordinary tensor for an untwisted form ${\cal A}$ and a pseudo-tensor for a twisted form ${\cal A}$. 
For a covariant approach for integration on a manifold, the use of twisted differential form as an appropriate integrand in physical meaningful integral expressions is essential.
In particular, a volume element (density) on $cal M$ is  defined as an arbitrarily smooth twisted $n$-form. We will  refer to a volume element as ${\rm ``vol"}$. In terms of  a prescribed volume element, an arbitrary twisted $k$-form can be expressed as 
\begin{equation}\label{base-express1}
    {\cal A}=\frac 1{k!}A^{\a_1\cdots\a_k}e_{\a_1}\rfloor\cdots e_{\a_k}\rfloor {\rm vol}.
\end{equation}
The collection of coefficients $A^{\a_1\cdots\a_k}$ in this formula, in contrast to (\ref{base-express}), represents an ordinary tensor.
Observe  useful interior form relations: 
 \begin{prop}
 For an arbitrary twisted (untwisted) $k$-form ${\cal A}$ on an $n$-dimensional manifold ${\cal M}$, 
\begin{equation}\label{2.1--7}
    e_\a\rfloor(\vt^\b\wedge {\cal A})={\cal A}\d_\a^\b-\vt^\b\wedge(e_\a\rfloor{\cal A})
\end{equation}
and
\begin{equation}\label{2.1--8}
    \vt^\a\wedge( e_\a\rfloor{\cal A})=k{\cal A},\qquad \qquad e_\a\rfloor(\vt^\a\wedge {\cal A})=(n-k){\cal A}\,.
\end{equation}
 \end{prop}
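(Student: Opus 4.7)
The plan is to derive the first identity (\ref{2.1--7}) as a direct instance of the Leibniz rule (\ref{2.1--5}), and then to obtain both identities in (\ref{2.1--8}) by specialization and summation, using induction on the form degree for the first of them.

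For (\ref{2.1--7}), I would apply (\ref{2.1--5}) with the first factor being the $1$-form $\vt^\b$ (so the sign prefactor $(-1)^k$ there reduces to $-1$) and the second factor being ${\cal A}$. Invoking the duality relation (\ref{2.1--6}) to replace $e_\a\rfloor\vt^\b$ by $\d_\a^\b$ then yields the claim immediately. Since neither the exterior nor the interior product cares about the twisting of ${\cal A}$, the identity holds verbatim in both the twisted and untwisted cases.

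For the first equation in (\ref{2.1--8}), $\vt^\a\wedge(e_\a\rfloor{\cal A})=k{\cal A}$, I would argue by induction on $k$. The base case $k=0$ is trivial since the interior product annihilates $0$-forms. For the inductive step, I would write ${\cal A}$ as a linear combination of wedge monomials $\vt^\b\wedge{\cal B}$ with ${\cal B}\in\Omega^{k-1}$ via (\ref{base-express}), apply (\ref{2.1--7}) to $e_\a\rfloor(\vt^\b\wedge{\cal B})$, and wedge with $\vt^\a$ on the left. The two resulting contributions are $\d_\a^\b\,\vt^\a\wedge{\cal B}=\vt^\b\wedge{\cal B}$ and $-\vt^\a\wedge\vt^\b\wedge(e_\a\rfloor{\cal B})=\vt^\b\wedge\vt^\a\wedge(e_\a\rfloor{\cal B})=(k-1)\,\vt^\b\wedge{\cal B}$, where the last step uses the inductive hypothesis applied to ${\cal B}$. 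They add to $k\,\vt^\b\wedge{\cal B}$, and linearity extends the identity to all $k$-forms.

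The second equation in (\ref{2.1--8}) follows from (\ref{2.1--7}) by setting $\b=\a$ and summing under the Einstein convention:
\[
e_\a\rfloor(\vt^\a\wedge{\cal A})=\d_\a^\a\,{\cal A}-\vt^\a\wedge(e_\a\rfloor{\cal A})=n{\cal A}-k{\cal A}=(n-k){\cal A},
\]
using $\d_\a^\a=n$ and the identity just proved. There is no substantive obstacle here; the only point that requires mild care is the anticommutation sign when pulling $\vt^\a$ past $\vt^\b$ in the inductive step. An alternative route is a direct calculation in the basis expansion (\ref{base-express}), where total antisymmetry of the coefficients yields $k$ equal contributions in the Leibniz expansion of $e_\a\rfloor$, but the inductive argument has the advantage of being basis-free.
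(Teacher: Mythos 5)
Your proof is correct. For the first identity and for the contraction step giving $e_\a\rfloor(\vt^\a\wedge{\cal A})=(n-k){\cal A}$, you follow exactly the paper's route: Eq.~(\ref{2.1--7}) is read off from the Leibniz rule (\ref{2.1--5}) with the $1$-form $\vt^\b$ in front and the duality relation (\ref{2.1--6}), and the second half of (\ref{2.1--8}) is the trace of (\ref{2.1--7}) using $\d_\a^\a=n$ together with the first half. Where you genuinely diverge is in the proof of $\vt^\a\wedge(e_\a\rfloor{\cal A})=k{\cal A}$: the paper expands ${\cal A}$ in the basis (\ref{base-express}), applies $e_\a\rfloor$ term by term to the monomial $\vt^{\a_1}\wedge\cdots\wedge\vt^{\a_k}$, and uses the total antisymmetry of the coefficients $A_{\a_1\cdots\a_k}$ to collapse the $k$ resulting terms into $k{\cal A}$ (this is the ``direct calculation in the basis expansion'' you mention only as an alternative). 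You instead run an induction on the degree, peeling off one coframe factor via ${\cal A}=\vt^\b\wedge{\cal B}$ and feeding (\ref{2.1--7}) back into itself; the sign bookkeeping in $-\vt^\a\wedge\vt^\b\wedge(e_\a\rfloor{\cal B})=+\vt^\b\wedge\vt^\a\wedge(e_\a\rfloor{\cal B})$ is handled correctly. Your version has the merit of being basis-free and of making (\ref{2.1--7}) do all the work, so the two parts of the proposition are visibly one identity iterated; the paper's version is shorter and makes the combinatorial origin of the factor $k$ explicit. Both are complete; since $e_\a\rfloor$ and $\wedge$ are linear over functions, your reduction to monomials $\vt^\b\wedge{\cal B}$ loses no generality even when the coefficients are non-constant, and, as you note, twisting of ${\cal A}$ is irrelevant throughout.
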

 \begin{proof}
 Eq.(\ref{2.1--7}) is a straightforward consequence of the rule (\ref{2.1--5}). 
 The first of Eqs.(\ref{2.1--8}) is derived by linearity:
 \begin{eqnarray}
\vt^\a\wedge( e_\a\rfloor{\cal A})&=&\frac 1{k!}A_{\a_1\cdots\a_p}\vt^\a\wedge\left(\d^{\a_1}_\a\vt^{\a_2}\wedge \cdots\wedge \vt^{\a_p} -\d^{\a_2}_\a\vt^{\a_1}\wedge \cdots\wedge \vt^{\a_p}+\cdots\right)      
 \nonumber\\
 &=&\frac k{k!}A_{\a_1\cdots\a_k}\vt^{\a_1}\wedge \cdots\wedge \vt^{\a_k}=k{\cal A}.
 \end{eqnarray}
 Then, the second of Eqs.(\ref{2.1--8}) follows from  Eq.(\ref{2.1--7}) by the contraction  in the indices $\a$ and $\b$.
 \end{proof}
Observe that for a top form ${\cal A}$ of the order $k=n$ Eq.(\ref{2.1--7}) yields
\begin{equation}\label{2.1--9}
    \vt^\b\wedge(e_\a\rfloor{\cal A})={\cal A}\d_\a^\b
\end{equation}
   \subsection{Constitutive map}
   In a wide range of physics models, especially in solid-state physics, the basis dynamical fields arice in pairs that are connected by a linear relation. 
The coefficients appearing in these {\it constitutive relation} express the fundamental media parameters.
 The constitutive relation construction has been successfully used in  classical electrodynamics \cite{Hehl-Obukhov} and even in gravity \cite{Itin:2018dru}.
   
   We start with a rather general definition of the constitutive law restricted only by linearity. We do not require, in particular, that the constitutive map be specified on differential forms of any arbitrary order. This is consistent with the concept of solid-state physics, where, for example, the dielectric tensor only correlates the vectors of the electric induction and the electric field and is not applicable to tensors of higher order.
\begin{definition}
{\it Constitutive map} $\diamond$ is defined as 
a general linear map on a $p$-form , 
\begin{equation}\label{2.3--1}
    \diamond: \Omega^p\to \Omega^q.
\end{equation}
 In particular, for two  arbitrary $p$-forms ${\cal A},{\cal B}$, and an arbitrary real  function (0-form) $f$ the following rules hold 
\begin{equation}\label{2.3--2}
    \diamond\,({\cal A}+{\cal B})=\diamond\,{\cal A}+\diamond\,{\cal B}\qquad {\rm and} \qquad \diamond(f{\cal A})=f\diamond{\cal A}\,.
\end{equation}
\end{definition}
The well-known example of a constitutive map is the Hodge  dual map $*$ that is defined on  differential forms (twisted and untwisted)  of an arbitrary order. 
On an $n$-dimensional manifold ${\cal M}$ endowed with a volume element ``${\rm vol}$'', the Hodge dual of an orthonormal coframe $\vt^\a$ satisfies the relations, see \cite{Hehl-Obukhov}, 
\begin{eqnarray}
*1&=&{\rm vol},\\
    *\vt^\a&=&\eta^{\a\mu}e_\mu\rfloor{\rm vol},\\
    *(\vt^\a\wedge\vt^\b)&=&\eta^{\a\mu}\eta^{\b\nu}e_\nu\rfloor e_\mu\rfloor{\rm vol},\qquad {\rm etc.} 
\end{eqnarray}
For an arbitrary $p$-form, Hodge map is defined by linearity,
\begin{equation}
    *{\cal A}=\frac 1{p!}A_{\a_1\cdots\a_p}*\left(\vt^{\a_1}\wedge \cdots\wedge \vt^{\a_p}\right).
\end{equation}
Observe  useful identities 
\begin{equation}
  \vt^\a\wedge *\vt^\b=\eta^{\a\b}\, {\rm vol}, \quad {\rm and}\quad  {\rm vol}=\frac 1n \eta_{\a\b}\vt^\a\wedge *\vt^\b
\end{equation}
that follow straightforwardly from the definition.

In order to deal with constitutive relations appearing in solid-state physics and premetric models, we extend the Hodge-dual map. 
\begin{definition}
The  linear map $\diamond$ will be called a {\it Hodge-type map} if it satisfies the conditions:
\begin{itemize}
    \item [(i)] It is a {\it dual} operator---for a $p$-form ${\cal F}$,  the form $\diamond{\cal F}$ is of the order $(n-p)$;
    \item [(ii)] It is a {\it twisted} (pseudo-tensorial) operator---the form $\diamond{\cal F}$ is twisted for an untwisted form ${\cal F}$ and vice versa;
    \item [(iii)] It is a {\it self-adjoint}  operator, i.e., for two $p$-forms ${\cal F},{\cal E}$
    \begin{equation}\label{3.1--1}
        \int{\cal F}\wedge\diamond\,{\cal E}=\int{\cal E}\wedge\diamond\,{\cal F}
    \end{equation}
\end{itemize}
\end{definition}
As a result, for any two arbitrary  $p$-forms ${\cal F},{\cal E}$,  the product ${\cal F}\wedge\diamond\,{\cal E}$ is a twisted $n$-form. Especially, the twisted Lagrangian $n$-form can be constructed as ${\cal L}\sim {\cal F}\wedge\diamond\,{\cal F}$.
\subsection{Variation operator on differential forms}
The Lagrangian $n$-form ${\cal L}$, as well as its ingredients (lower order differentials forms), are functions of  physical fields. These fields typically play a variety of roles in the variation procedure. The collection of physical variables can be separated into three subgroups within a given physics model:
\begin{itemize}
    \item [(i)] {\it Dynamical variables:} These variables are assumed to have a non-zero variation.  Field equations for these variables characterizing the dynamics of the physical fields are produced using the variation technique.
     \item [(ii)] {\it Semi-dynamical variables:} Despite the absence of field equations for these variables, it is presumed that semi-dynamical variables have non-zero variation.
     \item [(iii)] {\it Non-dynamical variables:} These variables are not included in any way in the variation procedure.
\end{itemize}
Let us give an illustration: In classical vacuum electrodynamics, the electric and magnetic fields are dynamical variables, whereas the electric current is a non-dynamical variable. The  metric tensor presents in the Lagrangian  as a semi-dynamical variable without its own field equation. Hilbert's expression for the electromagnetic energy-momentum tensor results from the variation of the metric tensor.

In order to account for variations in the physics field and independent variations in the geometric structure, we consider a differential manifold ${\cal M}$ of a dimension $n$  endowed with the coframe field $\vt^\a$.
The coframe variable will be regarded as semi-dynamical in the models that do not account for the effects of gravity. In gravity models, such as  telleparalel gravity, it is viewed as a fully dynamical variable. The coframe must additionally be confined to orthonormal in the standard metric GR.

Let us assume that the variation operator on differential forms fulfills the following rules:
\begin{definition}
The variation $\d$ of a $p$-form is defined as a smooth operator $\d:\Omega^p\to \Omega^p$ that satisfies the conditions:
\begin{itemize}
    \item [(1)] Linearity: for  (semi)dynamical $p$-forms ${\cal A, B}$ and non-dynamical scalar fields $\a,\b$,
    \begin{equation}\label{2.2--1}
        \d(\a {\cal A}+\b {\cal B})=\a \d {\cal A}+\b\d {\cal B};
        \end{equation}
    \item [(2)] For the wedge product of two (semi)dynamical forms the standard Lebniz rule holds
    \begin{equation}\label{2.2--2}
    \d({\cal A}\wedge {\cal B}) =(\d{\cal A})\wedge{\cal B} +{\cal A}\wedge (\d{\cal B});
    \end{equation}
\item [(3)] The exterior derivative of a (semi)dynamical form ${\cal A}$ commutes with the variation operator 
\begin{equation}\label{2.2--3}
    \d(d{\cal A})=d(\d{\cal A})\,.
\end{equation}
\item [(4)] For a (semi)dynamical 1-form ${\cal A}=A_\a \vt^\a$ with a (semi)dynamical coframe field $\vt^\a$,
\begin{equation}\label{2.2--4}
    \d{\cal A}=(\d A_\a)\vt^\a +A_\a(\d\vt^\a). 
\end{equation}
\end{itemize}
\end{definition}
As a result, we have in (\ref{2.2--4}) three variation quantities $\d{\cal A}, \d A_\a,$  and $\d\vt^\a$ that are connected by a single linear condition. One may select any two of them on their own. It is convenient to select the pair of differential forms $\{\d{\cal A},\d\vt^\a\}$ as independent variations in order to deal with the exterior forms formalism.

By using an interior product map taken with regard to the frame field $e_\a$, the relation (\ref{2.2--4}) can be equivalently stated for  $A_\a=e_\a\rfloor {\cal A}$ as 
\begin{equation}\label{2.2--5}
    \d{\cal A}=(\d A_\a)\vt^\a+\d\vt^\a \wedge (e_a\rfloor {\cal A}).
\end{equation}
Let us derive a generalization of this formula for an arbitrary $p$-form. 
\begin{prop}
Let a $p$-form ${\cal A}$ and a coframe field $\vt^\a$ both be assumed  dynamical (semi-dynamical) variables. Then  the variation of ${\cal A}$ takes the form 
\begin{equation}\label{2.2--6}
    \d{\cal A}=\frac 1{p!}(\d A_{\a_1\cdots\a_p})\,\vt^{\a_1}\wedge\cdots\wedge\vt^{\a_p}+\d\vt^\a\wedge (e_\a\rfloor A).
\end{equation}
\end{prop}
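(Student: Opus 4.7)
The plan is to expand $\mathcal{A}$ in the coframe basis via (\ref{base-express}), apply the variation operator term by term using rules (\ref{2.2--1}) and (\ref{2.2--2}), and then recognize the resulting coframe-variation sum as the compact expression $\delta\vartheta^\alpha\wedge(e_\alpha\rfloor\mathcal{A})$ by using the interior-product formula for a basis $p$-form.

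First, I would apply linearity of $\delta$ to $\mathcal{A}=\frac{1}{p!}A_{\alpha_1\cdots\alpha_p}\vartheta^{\alpha_1}\wedge\cdots\wedge\vartheta^{\alpha_p}$, treating the $A$-coefficients and the basis 1-forms as the variable objects. Then, iterating the Leibniz rule (\ref{2.2--2}) across the wedge product $\vartheta^{\alpha_1}\wedge\cdots\wedge\vartheta^{\alpha_p}$ produces
\begin{equation*}
\delta\mathcal{A}=\frac{1}{p!}(\delta A_{\alpha_1\cdots\alpha_p})\vartheta^{\alpha_1}\wedge\cdots\wedge\vartheta^{\alpha_p}+\frac{1}{p!}A_{\alpha_1\cdots\alpha_p}\sum_{k=1}^{p}\vartheta^{\alpha_1}\wedge\cdots\wedge\delta\vartheta^{\alpha_k}\wedge\cdots\wedge\vartheta^{\alpha_p}.
\end{equation*}
The first term already matches the stated right-hand side, so the task reduces to identifying the $k$-sum with $\delta\vartheta^\alpha\wedge(e_\alpha\rfloor\mathcal{A})$.

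Next I would compute $e_\alpha\rfloor\mathcal{A}$ using the graded Leibniz rule (\ref{2.1--5}) together with the duality relation (\ref{2.1--6}). This gives the familiar expansion
\begin{equation*}
e_\alpha\rfloor\mathcal{A}=\frac{1}{p!}A_{\alpha_1\cdots\alpha_p}\sum_{k=1}^{p}(-1)^{k-1}\delta_\alpha^{\alpha_k}\,\vartheta^{\alpha_1}\wedge\cdots\widehat{\vartheta^{\alpha_k}}\cdots\wedge\vartheta^{\alpha_p},
\end{equation*}
where the hat denotes omission. Wedging on the left by $\delta\vartheta^\alpha$ and then relocating the 1-form $\delta\vartheta^{\alpha_k}$ from the front back into its original $k$th slot (passing through $k-1$ one-forms and producing a sign $(-1)^{k-1}$) contributes a further factor $(-1)^{k-1}$, which cancels the one already present. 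The result is exactly the $k$-sum above, completing the identification.

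The only real obstacle is the double sign bookkeeping: one factor of $(-1)^{k-1}$ arises from pulling $e_\alpha$ across $k-1$ basis 1-forms in the interior product, and a second factor of $(-1)^{k-1}$ arises from reinserting $\delta\vartheta^{\alpha_k}$ into its proper slot. Provided these are tracked carefully, the proposition follows immediately. It is worth noting that the same argument works without modification if $\mathcal{A}$ is a twisted $p$-form, since the formulae (\ref{base-express}), (\ref{2.1--5}) and (\ref{2.1--6}) apply verbatim in that case.
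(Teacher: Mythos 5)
Your proof is correct, and it takes a genuinely different route from the paper's. The paper verifies the identity only for $p=2$: it writes out $\delta\bigl(\tfrac12 A_{\a\b}\vt^\a\wedge\vt^\b\bigr)$, uses the antisymmetry of $A_{\a\b}$ to collapse the two coframe-variation terms into one, checks directly that $e_\a\rfloor{\cal A}=A_{\a\b}\vt^\b$, and then simply asserts that the case $p>2$ ``follows by induction'' without carrying out the induction. You instead give a direct computation valid for arbitrary $p$: expand ${\cal A}$ in the basis, iterate the Leibniz rule to obtain the sum over insertion slots $\sum_k \vt^{\a_1}\wedge\cdots\wedge\d\vt^{\a_k}\wedge\cdots\wedge\vt^{\a_p}$, and match it against $\d\vt^\a\wedge(e_\a\rfloor{\cal A})$ by tracking the two factors of $(-1)^{k-1}$ — one from the graded Leibniz rule (\ref{2.1--5}) for the interior product, one from relocating $\d\vt^{\a_k}$ back into its slot — which cancel. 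The sign bookkeeping is right, and your argument actually supplies the general-$p$ content that the paper's proof leaves implicit; what the paper's version buys in exchange is a short, concrete illustration of the mechanism (in the $p=2$ case the antisymmetry of the coefficients does the work that your sign cancellation does in general). Your closing remark that the argument applies verbatim to twisted forms is also correct, and consistent with the paper's statement of Proposition 1 for both twisted and untwisted forms.
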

\begin{proof}
For a 2-form ${\cal A} =(1/2)A_{\a\b}\vt^\a\wedge \vt^\b$, we use the wedge product rule (\ref{2.2--2}) to express the variation as 
\begin{equation}
 \d{\cal A}=\frac 12\Big[(\d A_{\a\b})\vt^\a\wedge \vt^\b +A_{\a\b}\d(\vt^\a\wedge \vt^\b)\Big].
\end{equation}
The first term is as in (\ref{2.2--6}), while the second one is rewritten as 
\begin{eqnarray}
    && \frac 12 A_{\a\b}\big(\d(\vt^\a)\wedge \vt^\b+\vt^\a\wedge \d(\vt^\b)\big)=
     A_{\a\b}(\d\vt^\a)\wedge \vt^\b=\d\vt^\a\wedge\big(A_{\a\b}\vt^\b\big).
     \end{eqnarray}
     But
     \begin{equation}
        (e_\a\rfloor A)= \frac 12 A_{\mu\nu}(e_\a\rfloor (\vt^\mu\wedge\vt^\nu)=A_{\a\b}\vt^\b\,.
     \end{equation}
Hence, the relation (\ref{2.2--6}) holds for $p=2$. For  $p>2$, it follows by induction. 
\end{proof}

\subsection{Non-dynamical linear map and variation identity}
In this section, we consider a variation operator applied at a differential form transformed by a linear map. We assume that the variation of the transformed form $\diamond\,{\cal A}$  only consists of the variation of the coefficients of ${\cal A}$ and the variation of the coframe $\vt^\a$. Consequently,  for
the coframe field
\begin{equation}\label{2.3--3}
    \d(\diamond\vt^{\a})=\d\vt^\a\wedge(e_\mu\rfloor\diamond\vt^{\mu}),
\end{equation}
and for basis $p$-forms,
\begin{equation}\label{2.3--4}
     \d(\diamond\left(\vt^{\a_1}\wedge\cdots\wedge\vt^{\a_p}\right))=\d\vt^\a\wedge  (e_\a\rfloor\diamond\left(\vt^{\a_1}\wedge\cdots\wedge\vt^{\a_p}\right).
\end{equation}

The following identity provides a commutative relation for variation of a mapped form $\diamond{\cal A}$. A special case of this relation for the Hodge-dual map was provided in \cite{Muench:1998ay}. 
\begin{prop}
Let a $p$-form ${\cal A}$ and a coframe field $\vt^\a$ be (semi)dynamical variables. Let the variation of $\diamond\,\d{\cal A}$ consist from the variations of  ${\cal A}$ and of $\vt^\a$ only. Then, the commutative relation 
\begin{equation}\label{2.3--5}
    \boxed{\diamond\,\d{\cal A}-\d\diamond{\cal A}=
    \diamond\,(\d\vt^\a\wedge (e_a\rfloor {\cal A}))-\d\vt^\a\wedge  (e_a\rfloor {\diamond\,{\cal A}})}
\end{equation}
holds.
\end{prop}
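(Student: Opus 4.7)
The plan is to evaluate both sides of (\ref{2.3--5}) by expanding ${\cal A}$ in the coframe basis and using only the rules already established: linearity of $\diamond$, the variation formula (\ref{2.2--6}) for a general $p$-form, and the stipulated behaviour (\ref{2.3--4}) of $\d$ on $\diamond$ applied to basis $p$-forms. The point is that once ${\cal A}$ is written in components, the coefficients $A_{\a_1\cdots\a_p}$ are scalars that pass through $\diamond$, so the two computations $\diamond\d{\cal A}$ and $\d\diamond{\cal A}$ will share a common term and differ only by the coframe-variation pieces.

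First I would write ${\cal A}=\tfrac1{p!}A_{\a_1\cdots\a_p}\,\vt^{\a_1}\wedge\cdots\wedge\vt^{\a_p}$ and apply $\diamond$ by linearity to get $\diamond{\cal A}=\tfrac1{p!}A_{\a_1\cdots\a_p}\,\diamond(\vt^{\a_1}\wedge\cdots\wedge\vt^{\a_p})$. Taking the variation, the Leibniz rule (\ref{2.2--2}) together with the hypothesis (\ref{2.3--4}) yields
\begin{equation*}
\d\diamond{\cal A}=\frac1{p!}(\d A_{\a_1\cdots\a_p})\,\diamond(\vt^{\a_1}\wedge\cdots\wedge\vt^{\a_p})+\frac1{p!}A_{\a_1\cdots\a_p}\,\d\vt^\a\wedge\bigl(e_\a\rfloor\diamond(\vt^{\a_1}\wedge\cdots\wedge\vt^{\a_p})\bigr).
\end{equation*}
In the second term I would pull the scalar $A_{\a_1\cdots\a_p}$ through the interior product and recognise $\tfrac1{p!}A_{\a_1\cdots\a_p}\diamond(\vt^{\a_1}\wedge\cdots\wedge\vt^{\a_p})=\diamond{\cal A}$, giving the compact form $\d\diamond{\cal A}=\tfrac1{p!}(\d A_{\a_1\cdots\a_p})\,\diamond(\vt^{\a_1}\wedge\cdots\wedge\vt^{\a_p})+\d\vt^\a\wedge(e_\a\rfloor\diamond{\cal A})$.

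Next I would attack the other side. Apply the master variation formula (\ref{2.2--6}) to ${\cal A}$ and then hit the result with $\diamond$; by linearity of $\diamond$ (the coefficient variations $\d A_{\a_1\cdots\a_p}$ are scalars),
\begin{equation*}
\diamond\d{\cal A}=\frac1{p!}(\d A_{\a_1\cdots\a_p})\,\diamond(\vt^{\a_1}\wedge\cdots\wedge\vt^{\a_p})+\diamond\bigl(\d\vt^\a\wedge(e_\a\rfloor{\cal A})\bigr).
\end{equation*}
Subtracting the two expressions, the coefficient-variation term cancels identically and what remains is precisely the right-hand side of (\ref{2.3--5}).

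The proof is essentially bookkeeping, so there is no serious obstacle; the only step that deserves care is justifying why the $\d A_{\a_1\cdots\a_p}$ contribution matches exactly in both computations. This amounts to the observation that $\diamond$ is a \emph{linear} map on forms and so treats the 0-form coefficients $A_{\a_1\cdots\a_p}$ (and equally their variations) as inert scalars, together with the hypothesis (\ref{2.3--4}) that the variation of $\diamond$ on a basis $p$-form is exhausted by the $\d\vt^\a$ piece. Were either of these fail—e.g.\ if $\diamond$ depended on a further non-coframe parameter whose variation were non-zero—an additional term would appear, breaking (\ref{2.3--5}). A short induction or direct expansion in $p$ confirms that no combinatorial factor is lost in passing from one basis multi-index to the summation over $\a$.
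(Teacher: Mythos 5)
Your proof is correct and follows essentially the same route as the paper's: both compute $\diamond\,\d{\cal A}$ by applying the variation formula (\ref{2.2--6}) to ${\cal A}$ and pushing $\diamond$ through by linearity, compute $\d\diamond{\cal A}$ by expanding $\diamond{\cal A}$ over the mapped basis forms and using the hypothesis (\ref{2.3--4}), and then subtract so that the common coefficient-variation term cancels. Your version merely spells out slightly more explicitly the Leibniz-rule step that the paper compresses into a single invocation of (\ref{2.2--6}) for the transformed form.
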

\begin{proof}
Since the operators $\d$ and $\diamond$ are both linear, we have 
\begin{eqnarray}
\diamond\,(\d{\cal A})&=&
\diamond\left(\frac1{p!}\d{\cal A}_{\a_1\cdots\a_p}\vt^{\a_1}\wedge\cdots\wedge\vt^{\a_p}+\d\vt^\a\wedge (e_a\rfloor {\cal A})\right)\nonumber\\
      &=&\frac 1{p!}\d{\cal A}_{\a_1\cdots\a_p}\diamond \left(\vt^{\a_1}\wedge\cdots\wedge\vt^{\a_p}\right)+\diamond\big(\d\vt^\a\wedge (e_\a\rfloor {\cal A})\big).
\end{eqnarray}
For variation of the transformed form $\diamond\Omega$, we use (\ref{2.2--6}). 
Consequently,
\begin{eqnarray}
     \d(\diamond{\cal A})&=&\d\left(\frac 1{p!}{\cal A}_{\a_1\cdots\a_p}\diamond(\vt^{\a_1}\wedge\cdots\wedge\vt^{\a_p})\right)\nonumber\\&=&\frac 1{p!}\d{\cal A}_{\a_1\cdots\a_p}\diamond(\vt^{\a_1}\wedge\cdots\wedge\vt^{\a_p})+\d\vt^\a\wedge  (e_\a\rfloor \diamond{\cal A})).
\end{eqnarray}
Subtracting these two equations, we derive the identity (\ref{2.3--5}).
\end{proof}
A useful form of  identity (\ref{2.3--5}) is given by 
\begin{equation}\label{2.3--5x}
\boxed{ \d\diamond{\cal A}=\diamond\,\d{\cal A}-
    \diamond\,(\d\vt^\a\wedge (e_a\rfloor {\cal A}))+\d\vt^\a\wedge  (e_a\rfloor {\diamond\,{\cal A}})}
\end{equation}

Observe some simple consequences of this proposition:
\begin{itemize}
    \item [(1)] If ${\cal A}$  is a scalar, then $e_a\rfloor {\cal A}=0$, so 
\begin{equation}\label{mast1}
   \d\diamond{\cal A}= \diamond\d{\cal A}+ \d\vt^\a\wedge  (e_a\rfloor {\diamond{\cal A}})
\end{equation}
\item [(2)] Similarly, if $\diamond{\cal A}$ is a scalar, we have $e_a\rfloor {\diamond{\cal A}}=0$. Thus
\begin{equation}\label{mast2}
    \d\diamond{\cal A}=\diamond\d{\cal A}-
    \diamond(\d\vt^\a\wedge (e_a\rfloor {\cal A}))
\end{equation}
\item [(3)] When the variation of the coframe is presented by a 0-form matrix $\varepsilon^\a{}_\b$ of scalar parameters 
\begin{equation}
    \d\vt^\a=\varepsilon^\a{}_\b\vt^\b,
\end{equation}
Eq.(\ref{2.3--5x}) reads 
\begin{equation}\label{2.3--5xx}
 \d\diamond{\cal A}=\diamond\,\d{\cal A}-
    \varepsilon^\a{}_\b \left(\diamond\,(\vt^\b\wedge (e_a\rfloor {\cal A}))-\vt^\b\wedge  (e_a\rfloor {\diamond\,{\cal A}})\right)
\end{equation}
\item [(4)] For scalar variation of the coframe, $\d\vt^\a=\varepsilon \vt^\a $, the variation and constitutive map commute \begin{equation}\label{2.3--5xy}
\diamond\d{\cal A}=\d\diamond{\cal A}
\end{equation}
\end{itemize}
The linear operator $\diamond$ acts solely on the forms with the same order $p$ in all terms of  (\ref{2.3--5}). It means that even if the definition of the $\diamond$-operator  only applies to specific orders of forms, the formula is still correct.  We will see in the next section that this fact is  useful for physics applications.

\section{Variation of a Maxwell-type Lagrangian}
In this section we study how the variation commutative formula (\ref{2.3--5}) can be applied to a general Lagrangian of a Maxwell (Yang–Mills)  type.
\subsection{Generalized Maxwell-type Lagrangian}
We specialize the linear operator $\diamond$ to satisfy additional requirements of Definition 3 in order to build a Lagrangian from a $p$-form defined on an $n$-dimensional manifold.

We consider an action functional with  the {\it Maxwell-type Lagrangian} $n$-form
\begin{equation}\label{3.1--2}
    {\cal S}=\int{\cal L}=-\frac 12\int{\cal F}\wedge{\cal H} =-\frac 12\int{\cal F}\wedge\diamond\,{\cal F} 
\end{equation}
In this case, ${\cal F}$ is assumed to be an untwisted $p$-form, whereas the mapped form ${\cal H}=\diamond\,{\cal F}$ is  twisted and of the order $(n-p)$.

Variation of this  Lagrangian reads
\begin{equation}\label{3.1--3}
    \d{\cal S}=-\frac 12\int\d{\cal F}\wedge\diamond{\cal F}+{\cal F}\wedge\d\diamond{\cal F}.
\end{equation}
Using the commutative identity (\ref{2.3--5}), we rewrite the second term as
\begin{equation}\label{3.1--4}
    \d{\cal S}=-\frac 12\int\d{\cal F}\wedge\diamond{\cal F}+{\cal F}\wedge\Big(\diamond\d{\cal F}-
    \diamond(\d\vt^\a\wedge (e_a\rfloor {\cal F}))+\d\vt^\a\wedge  (e_a\rfloor \diamond {\cal F})\Big)
\end{equation}
Due to the self-adjoint property, 
\begin{equation}
    {\cal F}\wedge\diamond\d{\cal F}=\d{\cal F}\wedge\diamond{\cal F}.
\end{equation}
Consequently, 
\begin{equation}\label{3.1--5}
    \d{\cal S}=-\int\d{\cal F}\wedge\diamond{\cal F}+\d\vt^\a\wedge\frac 12\Big((e_a\rfloor {\cal F})\wedge\diamond{\cal F} -
    (-1)^p{\cal F}\wedge (e_a\rfloor \diamond{\cal F})\Big)
\end{equation}
We now identify the fundamental characteristic of the Maxwell (Yang-Mils) Lagrangian, namely the assumption that the field strength ${\cal F}$  is a closed form.
\begin{equation}\label{3.1--6}
    d{\cal F}=0.
\end{equation}
The form ${\cal F}$ is hence exact on a contractible domain, i.e., there is a  potential $(p-1)$-form ${\cal A}$ such that
\begin{equation}\label{3.1--7}
    {\cal F}=d{\cal A}.
\end{equation}
Another fundamental  assumption of Maxwell theory---the independent variations must be considered with respect to the potential ${\cal A}$. Then, the first term in  (\ref{3.1--5}) takes the form 
 \begin{equation}
     \d{\cal F}\wedge\diamond{\cal F}=
     \d (d{\cal A})\wedge\diamond{\cal F}=
     d (\d{\cal A})\wedge\diamond{\cal F}
 \end{equation}
Extracting the total derivative we are left with 
 \begin{equation}
\d{\cal F}\wedge\diamond{\cal F}= d \Big(\d{\cal A}\wedge\diamond{\cal F}\Big)- \d{\cal A}\wedge (-1)^{p-1} d\diamond{\cal F}. 
 \end{equation}
 Assume the variation of the potential ${\cal A}$ and the coframe $\vt^\a$ to be independent one on the other.
 Neglecting the total derivative term we are left with 
 the  field equation for ${\cal F}$ in the form
 \begin{equation}\label{3.1--8}
   \boxed{  d\diamond{\cal F}=0}
 \end{equation}
 
The second term in (\ref{3.1--5}) is identified as {\it Hilbert's energy-momentum current} of the field ${\cal F}$

\begin{equation}\label{3.1--9o}
\Sigma_\a=\frac{\delta {\cal L}}{\delta\vt^\a}
\end{equation}
Explicitly, we have 
\begin{equation}\label{3.1--9}
   \boxed{ \Sigma_\a=\frac 12\Big((-1)^p{\cal F}\wedge (e_\a\rfloor \diamond{\cal F})-(e_\a\rfloor {\cal F})\wedge\diamond{\cal F}\Big).}
\end{equation}
Observe that this expression is a straightforward consequence of the commutative identity (\ref{2.3--5}).
It is not depend on the field expression and even does not require the introduction of the potential. For an arbitrary untwisted $p$-form ${\cal F}$,  the expressions for $\Sigma_\a$ are twisted $(n-1)$-forms. Then they can serve as an invariant integrand over arbitrary  hypersurface in $\cal M$. 

Using the interior product identity
\begin{equation}
    e_\a\rfloor ({{\cal F}\wedge\diamond{\cal F}})=
     (e_\a\rfloor {{\cal F})\wedge\diamond{\cal F}}+
      (-1)^p {{\cal F}\wedge (e_\a\rfloor\diamond{\cal F}})
\end{equation}
we obtain two additional forms of the energy-momentum current expression 
\begin{equation}\label{3.1--10}
    \Sigma_\a=e_a\rfloor {\cal L} +
   (-1)^p{\cal F}\wedge (e_\a\rfloor \diamond{\cal F}).
\end{equation}
and
\begin{equation}\label{3.1--10x}
    \Sigma_\a=-e_\a\rfloor {\cal L}-(e_\a\rfloor {\cal F})\wedge \diamond{\cal F}.
\end{equation}
For an $n$-form $\cal F$, we have from (\ref{3.1--10}) a simple expression $\Sigma_\a=e_a\rfloor {\cal L}$. For a scalar form $\cal F$, Eq.(\ref{3.1--10x}) yields $\Sigma_\a=-e_a\rfloor {\cal L}$.

We summarize the results for the Maxwell-type model as follows:
\begin{itemize}
    \item The twisted $n$ form of an action  functional 
    \begin{equation}\label{sum-act}
        {\cal S}=-\frac 12 \int_M {\cal F}\wedge {\cal H}
    \end{equation}
    \item The untwisted $p$-form ${\cal F}$ and a twisted $(n-p)$-form ${\cal H}$ are connected by a linear  constitutive relation 
    \begin{equation}\label{sum-const}
        {\cal H}=\diamond{\cal F}
    \end{equation}
    \item The pair of field equations reads 
    \begin{equation}\label{sum-eqs}
        d{\cal F}=0,\qquad d{\cal H}={\cal J}
    \end{equation}
    \item The energy-momentum tensor is expressed as 
    \begin{equation}\label{sum-energy}
 \Sigma_\a=\frac 12\Big((-1)^p{\cal F}\wedge (e_\a\rfloor {\cal H})-(e_\a\rfloor {\cal F})\wedge{\cal H}\Big).
\end{equation}
\end{itemize}

\subsection{Energy-momentum current vs energy-momentum tensor}
Although the $(n-1)$-form current $\Sigma_\a$ is the proper description  of the energy-momentum quantity, it is useful to have an alternative tensor description. 
We will now derive a relationship between the energy-momentum current $\Sigma_\a$ and Hilbert's energy-momentum tensor $T^{ij}$. 

Recall that the current $\Sigma_\a$ is defined implicitly via the variation relation 
\begin{equation}\label{3.2--1}
    \d{\cal L}=\Sigma_\a\wedge \d\vt^\a.
\end{equation}
Then $\Sigma_\a$ is a vector-valued twisted $(n-1)$-form with $n^2$  independent components. 
The Lagrangian $n$-form ${\cal L}$ can be turned into a conventional Lagrangian function $L$ when a volume element on the manifold is prescribed.
 For this, we use a relation ${\cal L}=L\,{\rm vol}$. 
For a tensor representation of the energy-momentum current, we need a basis of twisted $(n-1)$-forms. It can be constructed from the set $\{e_\a\rfloor{\rm vol}\}$. Then the twisted form $\Sigma_\a$ is expressed as 
\begin{equation}\label{3.2--3}
    \Sigma_\a=\Theta_\a{}^\b \left(e_\b\rfloor{\rm vol}\right),
\end{equation}
or, equivalently, 
\begin{equation}\label{3.2--3x}
    \vt^\b\wedge\Sigma_\a=\Theta_\a{}^\b{\rm vol}.
\end{equation}
With respect to the smooth non-degenerate transformations of the coframe, the coefficients $\Theta_\a{}^\b$  constitute an  ordinary (untwisted) tensor. This  tensor   has the same $n^2$ independent components as the vector-valued $(n-1)$-form  $\Sigma_\a$. Note that the  definition of $\Theta_\a{}^\b$ requires only a  volume structure on the manifold.  Such structure is prescribed by an arbitrary smooth twisted $n$-form. In particular, $\Theta_\a{}^\b$ can be defined on a manifold without a prescribed metric. 

Let us turn now to Hilbert's energy-momentum tensor. It can also be defined implicitly using the variation relation
\begin{equation}\label{3.2-4}
    \d{\cal L}=\frac 12 T^{ij}\d g_{ij}{\rm vol}.
\end{equation}
When the volume element is identified as ${\rm vol}=\sqrt{-g}d^nx$, this definition corresponds to the widely-used variation derivative expression
\begin{equation}\label{3.2-4x}
    T^{ij}=\frac 2{\sqrt{-g}}\frac{\d L}{\d g_{ij}}
\end{equation}
The existence of a richer structure on the manifold—-metric structure is required by definition (\ref{3.2-4}). The tensor $T^{ij}$ is evidently symmetric  and composed of only $n(n-1)/2$ independent components. Furthermore, this tensor is referred to the holonomic coordinates $x^i$ basis (rather than the unholonomic basis $\vt^\a$) and transformed by the normal tensor law under the coordinate transformations.
Taking these restrictions into account,  we are able to establish a relation between two energy-momentum quantities.

\begin{prop}
Let a manifold be endowed with a metric $g_{ij}$ and an orthonormal coframe $\vt^\a=\vt^\a_idx^i$ such that 
\begin{equation}\label{3.2-5}
    g_{ij}=\eta_{\a\b}\vt^\a{}_i\,\vt^\b{}_j.
\end{equation}
The coframe energy-momenum current $\Sigma_\a$ and the Hilbert energy momentum tensor $T^{ij}$ are related as
\begin{equation}\label{3.2--6}
 T^{\a\b}{\rm vol}=\frac 12\Sigma_\g\wedge\left(\eta^{\a\g}\vt^\b+\eta^{\b\g}\vt^\a\right),
\end{equation}
where 
\begin{equation}\label{3.2-7}
T^{\a\b}=
T^{ij}\vt^\a{}_i\,\vt^\b{}_j.
\end{equation}
\end{prop}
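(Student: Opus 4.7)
The plan is to equate the two independent expressions for the variation of the Lagrangian density: the coframe expression $\d{\cal L}=\Sigma_\a\wedge\d\vt^\a$ from (\ref{3.2--1}) and Hilbert's expression $\d{\cal L}=\tfrac{1}{2}T^{ij}\d g_{ij}\,{\rm vol}$ from (\ref{3.2-4}). The orthonormality relation (\ref{3.2-5}) serves as the bridge: it renders $g_{ij}$ quadratic in the coframe components $\vt^\a{}_i$, so a variation of the coframe induces a specific variation of the metric, and comparing coefficients of $\d\vt^\a{}_i$ will yield the desired formula.

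First I would differentiate (\ref{3.2-5}) to obtain $\d g_{ij}=\eta_{\a\b}\bigl(\vt^\b{}_j\,\d\vt^\a{}_i+\vt^\a{}_i\,\d\vt^\b{}_j\bigr)$ and substitute this into Hilbert's expression. The symmetry of $T^{ij}$ collapses the two contributions into $\tfrac{1}{2}T^{ij}\d g_{ij}=T^{ij}\eta_{\a\b}\vt^\b{}_j\,\d\vt^\a{}_i$. On the coframe side, expanding $\d\vt^\a=\d\vt^\a{}_i\,dx^i$ and pulling the scalar factor $\d\vt^\a{}_i$ outside the wedge yields $\Sigma_\a\wedge\d\vt^\a=\d\vt^\a{}_i\,(\Sigma_\a\wedge dx^i)$. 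Equating the two sides and invoking the independence of the $n^2$ parameters $\d\vt^\a{}_i$ produces the intermediate identity $\Sigma_\a\wedge dx^i=T^{ij}\eta_{\a\b}\vt^\b{}_j\,{\rm vol}$.

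The last step is algebraic. I would contract both sides with $\vt^\g{}_i$, noting that $\vt^\g{}_i\,dx^i=\vt^\g$, and then raise the free $\a$ index with $\eta^{\a\m}$. This gives $\eta^{\a\m}\Sigma_\a\wedge\vt^\g=T^{\m\g}\,{\rm vol}$, where $T^{\m\g}$ is defined as in (\ref{3.2-7}). Because the right-hand side is manifestly symmetric in $(\m,\g)$, I would then symmetrize the left-hand side; the symmetrization is trivial on the right and combines the two terms on the left into a wedge product with $\eta^{\a\m}\vt^\g+\eta^{\a\g}\vt^\m$. After relabelling the frame indices this reproduces exactly (\ref{3.2--6}).

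The only subtle point is the symmetrization at the end. The intermediate relation $\eta^{\a\m}\Sigma_\a\wedge\vt^\g=T^{\m\g}\,{\rm vol}$ in fact forces the antisymmetric part of the left-hand side to vanish for a Maxwell-type Lagrangian, a fact that is not transparent from the definition of $\Sigma_\a$ as a coframe variation but is automatic once one invokes the symmetry of $T^{ij}$ and the metric origin of Hilbert's prescription. Writing the result in the manifestly symmetric form stated in the proposition makes this cancellation built in and reflects the well-known principle that Hilbert's tensor records only the symmetric part of the vector-valued energy-momentum current produced by the coframe variation.
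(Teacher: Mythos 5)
Your overall strategy is the same as the paper's: equate the coframe expression $\d{\cal L}=\Sigma_\a\wedge\d\vt^\a$ with Hilbert's expression $\d{\cal L}=\tfrac12 T^{ij}\d g_{ij}\,{\rm vol}$, use the orthonormality relation to translate a coframe variation into a metric variation, and compare coefficients. The paper parametrizes the coframe variation by the anholonomic matrix $\varepsilon^\a{}_\b$ with $\d\vt^\a=\varepsilon^\a{}_\b\vt^\b$ and works with the components $\Theta_\a{}^\b$ of $\Sigma_\a$, whereas you work with the holonomic components $\d\vt^\a{}_i$; that difference is purely cosmetic.

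There is, however, one step that overreaches. You invoke ``the independence of the $n^2$ parameters $\d\vt^\a{}_i$'' to extract the unsymmetrized identity $\eta^{\a\mu}\Sigma_\a\wedge\vt^\g=T^{\mu\g}\,{\rm vol}$, and only afterwards symmetrize. Coefficient matching over all $n^2$ parameters is legitimate only if the two expressions for $\d{\cal L}$ agree for \emph{every} coframe variation, which requires that ${\cal L}$ depend on $\vt^\a$ solely through $g_{ij}$. In the general setting of this paper the current $\Sigma_\a$ may carry a nonvanishing antisymmetric part $\Theta^{[\a\b]}$ (this is exactly what Proposition 6 and the $(n-2)$-form ${\cal W}$ are designed to measure), and then your unsymmetrized intermediate identity is simply false: its left-hand side is not symmetric while its right-hand side is. The Hilbert expression $\tfrac12 T^{ij}\d g_{ij}$ probes only the $n(n+1)/2$ variations with $\varepsilon_{(\a\b)}$ arbitrary and $\varepsilon_{[\a\b]}=0$, so the coefficients may be matched only on that subspace; doing so yields directly $T^{\a\b}=-\tfrac12(\Theta^{\a\b}+\Theta^{\b\a})$, i.e.\ the symmetrized relation (\ref{3.2--6}), with no detour through a stronger claim. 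This is precisely how the paper argues. Your closing paragraph shows you sensed the issue, but the remark that the antisymmetric part ``is forced to vanish'' by the metric origin of Hilbert's prescription is an additional hypothesis, not a consequence of the proposition's assumptions; the clean repair is to restrict the coefficient comparison to symmetric variations from the outset.
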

\begin{proof}
It is convenient to express the variation of the Lagrangian in term of the variation matrix $\varepsilon^\a{}_\g$ satisfying the equation $\d\vt^\a=\varepsilon^\a{}_\g\vt^\g$. 
Substituting  expression (\ref{3.2--3}) into Eq.(\ref{3.2--1}) we have 
\begin{equation}\label{3.2-9}
    \d{\cal L}=\Theta_\a{}^\b \left(e_\b\rfloor{\rm vol}\right)\wedge \d\vt^\a.
\end{equation}
In terms of the variation matrix, it reads 
\begin{equation}\label{3.2-10}
      \d{\cal L}=-\Theta_\a{}^\b\varepsilon^\a{}_\g \vt^\g\wedge \left(e_\b\rfloor{\rm vol}\right).
\end{equation}
Using (\ref{2.1--9}) we obtain a compact expression
\begin{equation}\label{3.2-11}
      \d{\cal L}= -\Theta_\a{}^\b\varepsilon^\a{}_\b\,{\rm vol}.
\end{equation}

To deal with Hilbert's energy-momentum tensor, we must express the variation of the metric tensor in terms of the variation of the coframe. Using (\ref{3.2-4}) we write 
\begin{equation}\label{3.2-12}
    \d g_{ij}=\d(\eta_{\a\b}\vt^\a{}_i\,\vt^\b{}_j)=\eta_{\a\b}\left(\d(\vt^\a{}_i)\,\vt^\b{}_j+\vt^\a{}_i\,\d(\vt^\b{}_j)\right).
\end{equation}
In terms of the dual frame components $e_\rho{}^i$ satisfying $e_\rho{}^i\vt^\mu{}_i=\d^\mu_\rho$, we obtain 
\begin{equation}
    \d\vt^\mu=(\d\vt^\mu{}_i) dx^i=(\d\vt^\mu{}_i)e_\rho{}^i\vt^\rho.
\end{equation}
Thus
\begin{equation}
    \varepsilon^\mu{}_\rho\vt^\rho=(\d\vt^\mu{}_i)e_\rho{}^i\vt^\rho,
\end{equation}
or, equivalently,
\begin{equation}
   \d\vt^\mu{}_i =\varepsilon^\mu{}_\rho\vt^\rho{}_i.
\end{equation}
Consequently the variation of the metric tensor is presented in the form
\begin{equation}\label{3.2-12}
    \d g_{ij}=\eta_{\a\b}\left(\varepsilon^\a{}_\rho\,\vt^\rho{}_i\vt^\b{}_j+\varepsilon^\b{}_\rho\vt^\a{}_i\,\vt^\rho{}_j\right).
\end{equation}
Hence the metric variation of the Lagrangian (\ref{3.2-4}) is presented as
\begin{equation}\label{3.2-13}
    \d{\cal L}=\frac 12\eta_{\a\b}\left(\varepsilon^\a{}_\rho\,T^{\rho\b}+\varepsilon^\b{}_\rho T^{\a\rho}\right){\rm vol}=\eta_{\a\b}\varepsilon^\a{}_\rho\,T^{\rho\b}{\rm vol},
\end{equation}
where $T^{\a\b}$ is defined in (\ref{3.2-7}). 

We will now compare two expressions of  $ \d{\cal L}$ given in equations (\ref{3.2-11}) and (\ref{3.2-13}). Even though these expressions appear to be very similar, they have different features. Eq.(\ref{3.2-11}) is independent of the metric, whereas Eq.(\ref{3.2-13}) explicitly contains the metric tensor.   Moreover, the expression in (\ref{3.2-11}) contains  the asymmetric tensor $\Theta_\a{}^\a$, while the expression in (\ref{3.2-13}) contains the symmetric tensor $T^{\rho\b}$. As a result, the corresponding variations of the Lagrangian in  (\ref{3.2-13}) are only dependent on the symmetric variations of the coframe. Indeed, using the notation $\varepsilon_{\b\rho}=\eta_{\a\b}\varepsilon^\a{}_\rho$ we have from (\ref{3.2-13}) the symmetry relation $\varepsilon_{\b\rho}=\varepsilon_{\rho\b}$. Through using metric tensor, we rewrite now (\ref{3.2-11}) as $\d{\cal L}= -\Theta^{\a\b}\varepsilon_{\a\b}\,{\rm vol}$ and only employ the symmetric variation matrix in this expression.
As a result of comparing two expressions for $ \d{\cal L}$, we have
\begin{equation}
T^{\a\b}=-\frac 12(\Theta^{\a\b}+\Theta^{\a\b}).
\end{equation}
In terms of differential forms, this equation is expressed as
\begin{equation}
    T^{\a\b}\,{\rm vol}=-\frac 12
    (\eta^{\g\a}\vt^\b\wedge\Sigma_\g
+\eta^{\g\b}\vt^\a\wedge\Sigma_\g).
\end{equation}
The latter expression is equivalent to (\ref{3.2--6}). 
\end{proof}

\subsection{Trace of Hilbert's energy-momentum current}
The trace of the energy-momentum  current $\Theta_\a{}^\a$ prescribes in  field theory  essential physical quantity, such as photon mass. In differential-form approach, we can construct a scalar-valued $n$-form $\vt^\a\wedge\Sigma_\a$ that has one component as the scalar $\Theta_\a{}^\a$. 
Using Eq.(\ref{3.1--2}), we have
\begin{equation}\label{3.3--1}
    \vt^\a\wedge\Sigma_\a=\Theta_\a{}^\b \vt^\a\wedge\left(e_\b\rfloor{\rm vol}\right)=\Theta_\a{}^\a\,{\rm vol}.
\end{equation}
 Consequently, the trace of the energy-momentum tensor  is equivalent to the $n$-form $\vt^\a\wedge\Sigma_\a$. 

\begin{prop} The trace term for a Maxwell-type Lagrangian with an arbitrary $p$-form strength  ${\cal F}$ is expressed as 
\begin{equation}\label{3.3--2}
    \vt^\a\wedge\Sigma_\a=(2p-n){\cal L}.
\end{equation}
In particular, the energy momentum current is traceless if and only if the space is even-dimensional and ${\cal F}$ is a middle form. 
\end{prop}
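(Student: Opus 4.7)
The plan is to insert the explicit expression (\ref{3.1--9}) for $\Sigma_\a$ into the contraction $\vt^\a\wedge\Sigma_\a$ and then reduce each of the two resulting terms to a multiple of the Lagrangian $\mathcal{L}=-\tfrac12\mathcal{F}\wedge\diamond\mathcal{F}$ by applying the counting identities (\ref{2.1--8}) from Proposition~1. Concretely, I would write
\begin{equation*}
\vt^\a\wedge\Sigma_\a=\tfrac12\Big((-1)^p\,\vt^\a\wedge\mathcal{F}\wedge(e_\a\rfloor\diamond\mathcal{F})-\vt^\a\wedge(e_\a\rfloor\mathcal{F})\wedge\diamond\mathcal{F}\Big),
\end{equation*}
and handle the two summands separately.

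For the second summand, $\mathcal{F}$ is a $p$-form, so by the first equation in (\ref{2.1--8}) one has $\vt^\a\wedge(e_\a\rfloor\mathcal{F})=p\,\mathcal{F}$, giving a contribution $-\tfrac{p}{2}\,\mathcal{F}\wedge\diamond\mathcal{F}=p\,\mathcal{L}$. For the first summand, I first commute the 1-form $\vt^\a$ past the $p$-form $\mathcal{F}$, which produces a factor $(-1)^p$ that cancels the prefactor $(-1)^p$ already present. Since $\diamond\mathcal{F}$ is an $(n-p)$-form, the same identity (\ref{2.1--8}) now yields $\vt^\a\wedge(e_\a\rfloor\diamond\mathcal{F})=(n-p)\diamond\mathcal{F}$, so the first summand equals $\tfrac{n-p}{2}\,\mathcal{F}\wedge\diamond\mathcal{F}=-(n-p)\,\mathcal{L}$.

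Adding the two contributions gives $\vt^\a\wedge\Sigma_\a=\bigl(p-(n-p)\bigr)\mathcal{L}=(2p-n)\mathcal{L}$, which is (\ref{3.3--2}). The second assertion follows at once: $(2p-n)\mathcal{L}$ vanishes identically in $\mathcal{F}$ precisely when $2p=n$, i.e.\ when $n$ is even and $\mathcal{F}$ is a middle-degree form.

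There is no real obstacle here; the only point requiring care is the sign bookkeeping when commuting $\vt^\a$ past $\mathcal{F}$ in the first term, where one must verify that the induced $(-1)^p$ exactly cancels the explicit $(-1)^p$ in (\ref{3.1--9}) so that the first and second summands carry opposite signs. Everything else is a direct application of Proposition~1 and the definition of $\mathcal{L}$.
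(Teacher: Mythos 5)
Your proposal is correct and follows essentially the same route as the paper: substitute the explicit formula (\ref{3.1--9}) for $\Sigma_\a$, commute $\vt^\a$ past the $p$-form ${\cal F}$ in the first term so the induced $(-1)^p$ cancels the explicit one, and apply the counting identities (\ref{2.1--8}) to reduce both summands to multiples of ${\cal F}\wedge\diamond{\cal F}=-2{\cal L}$. The sign bookkeeping you flag is handled identically in the paper's own computation, so there is nothing to add.
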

\begin{proof}
Calculate using (\ref{3.1--2}) 
\begin{eqnarray}\label{3.3--3}
    \vt^\a\wedge\Sigma_\a&=&\frac 12\vt^\a\wedge\Big((-1)^p{\cal F}\wedge (e_a\rfloor \diamond{\cal F})
    -(e_a\rfloor {\cal F})\wedge\diamond{\cal F} 
   \Big)\nonumber\\
   &=&\frac 12\Big[({\cal F}\wedge \vt^\a\wedge(e_a\rfloor \diamond{\cal F})-
   \vt^\a\wedge(e_a\rfloor {\cal F})\wedge\diamond{\cal F}
   \Big]
\end{eqnarray}
Using the first of the relations (\ref{2.1--8}) we obtain 
\begin{equation}\label{3.3--4}
 \vt^\a\wedge(e_a\rfloor {\cal F})=p{\cal F}   
\end{equation}
and 
\begin{equation}\label{3.3--5}
 \vt^\a\wedge(e_a\rfloor \diamond{\cal F})=(n-p){\cal F}. 
\end{equation}
Consequently, 
\begin{equation}\label{3.3--6}
 \vt^\a\wedge\Sigma_\a=\frac 12(n-2p){\cal F}\wedge\diamond {\cal F}=-(n-2p){\cal L}.
\end{equation}
\end{proof}
In important physics applications, such as electrodynamics, the field strength ${\cal F}$ is a 2-form in the 4-dimensional space-time. Eq.(\ref{3.3--2}) shows that the energy-momentum current for such models is traceless. 
\subsection{Symmetry of Hilbert's energy-momentum current}
The energy-momentum current $\Sigma_\a$ is equivalent to a mixed tensor  $T_\a{}^\b$ with $n^2$ components. The metric tensor $\eta^{\a\b}$ must be used to convert this tensor into a covariant tensor $T^{\a\b}$ or a contravariant tensor $T_{\a\b}$ in order to examine its symmetry. Then it is possible to extract the symmetry part $T^{(\a\b)}$  of the $n(n+1)/2$ components and the skew-symmetric part $T^{[\a\b]}$ of the $n(n-1)/2$ components.
The same $n(n-1)/2$ independent components can be represented equivalently as a scalar-valued $(n-2)$-form. We define 
\begin{equation}
    {\cal W}=\eta^{\a\b}e_\a\rfloor\Sigma_\b.
\end{equation}
\begin{prop}
The Hilbert energy-momentum current $\Sigma_\a$ represents the symmetric tensor iff 
the $(n-2)$-form ${\cal W}$ vanishes.
\end{prop}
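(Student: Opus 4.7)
The plan is to compute ${\cal W}$ directly in terms of the mixed-index tensor $\Theta_\a{}^\b$ introduced in (\ref{3.2--3}), and then observe that the nested interior products on the volume form yield a basis of $(n-2)$-forms whose antisymmetry in the two indices isolates exactly the skew part of $\Theta^{\a\b}$.

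First I would substitute $\Sigma_\b = \Theta_\b{}^\g(e_\g\rfloor{\rm vol})$ into the definition of ${\cal W}$:
\begin{equation}
{\cal W}=\eta^{\a\b}e_\a\rfloor\Sigma_\b=\eta^{\a\b}\Theta_\b{}^\g\,e_\a\rfloor e_\g\rfloor{\rm vol}=\Theta^{\a\g}\,e_\a\rfloor e_\g\rfloor{\rm vol},
\end{equation}
where I have used $\Theta^{\a\g}=\eta^{\a\b}\Theta_\b{}^\g$ to raise the mixed index. The next step is to invoke the antisymmetry relation (\ref{2.1--4}), which gives $e_\a\rfloor e_\g\rfloor{\rm vol}=-e_\g\rfloor e_\a\rfloor{\rm vol}$. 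Contracting with $\Theta^{\a\g}$ therefore kills the symmetric part and leaves
\begin{equation}
{\cal W}=\tfrac{1}{2}(\Theta^{\a\g}-\Theta^{\g\a})\,e_\a\rfloor e_\g\rfloor{\rm vol}=\Theta^{[\a\g]}\,e_\a\rfloor e_\g\rfloor{\rm vol}.
\end{equation}

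The final step is to argue that the family $\{e_\a\rfloor e_\g\rfloor{\rm vol}\}_{\a<\g}$ is linearly independent and in fact forms a basis of the $(n-2)$-forms, so that the $(n-2)$-form ${\cal W}$ vanishes if and only if every coefficient $\Theta^{[\a\g]}$ vanishes. The latter is exactly the symmetry condition $\Theta^{\a\g}=\Theta^{\g\a}$, which means the Hilbert energy-momentum current $\Sigma_\a$ represents a symmetric tensor. This completes the equivalence in both directions.

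I do not expect any real obstacle here: the only point that requires a little care is the bookkeeping of the sign and the index raising from $\Theta_\a{}^\b$ to $\Theta^{\a\b}$, and the justification that the twisted $(n-2)$-forms $e_\a\rfloor e_\g\rfloor{\rm vol}$ with $\a<\g$ are independent, which follows immediately from the duality (\ref{2.1--6}) and the fact that ${\rm vol}$ is a non-vanishing top form. Hence the argument reduces to a direct algebraic manipulation using only the structural identities (\ref{2.1--4})--(\ref{2.1--6}) already established.
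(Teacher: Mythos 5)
Your proposal is correct and follows essentially the same route as the paper: substitute the tensor presentation $\Sigma_\b=\Theta_\b{}^\g(e_\g\rfloor{\rm vol})$ into ${\cal W}$, use the antisymmetry of the nested interior products to reduce to $\Theta^{[\a\g]}\,e_\a\rfloor e_\g\rfloor{\rm vol}$, and conclude from the linear independence of the basis $(n-2)$-forms. The only difference is that you make explicit the final basis-independence step, which the paper leaves implicit.
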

\begin{proof}
Using the tensor presentation (\ref{3.2--3}), we derive
\begin{equation}
    {\cal W}=\eta^{\a\b}e_\a\rfloor \left(\Theta_\b{}^\nu e_\nu\rfloor{\rm vol}\right)=\Theta^{\a\nu}\left(e_\a\rfloor  e_\nu\rfloor{\rm vol}\right)=\Theta^{[\a\b]}\left(e_\a\rfloor  e_\b\rfloor{\rm vol}\right)
\end{equation}
Thus the $(n-2)$-form ${\cal W}$ indeed represents the skew-symmetric part of the energy-momentum tensor. The equation ${\cal W}=0$ means symmetry of the energy-momentun current. 
\end{proof}

\begin{prop}
For the Maxwell-type Lagrangian ${\cal L}=(1/2) {\cal F}\wedge\diamond{\cal F}$, the skew-symmetric 2-form ${\cal W}$ of the Hilbert energy-momentum current $\Sigma_\a$ is expressed as 
\begin{equation}\label{W}
    {\cal W}= (-1)^p \eta^{\a\b}\Big((e_\a\rfloor  {\cal F})\wedge (e_\b\rfloor \diamond{\cal F})\Big)
\end{equation}
\end{prop}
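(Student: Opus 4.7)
The plan is a direct computation starting from the explicit formula (\ref{3.1--9}) for $\Sigma_\b$. Substituting it into the definition ${\cal W}=\eta^{\a\b}e_\a\rfloor\Sigma_\b$ gives
\begin{equation*}
2{\cal W} \;=\; \eta^{\a\b}\,e_\a\rfloor\Big((-1)^p{\cal F}\wedge (e_\b\rfloor\diamond{\cal F}) \;-\; (e_\b\rfloor{\cal F})\wedge\diamond{\cal F}\Big),
\end{equation*}
so the entire task reduces to evaluating $e_\a\rfloor$ of each of the two wedge products and then contracting with $\eta^{\a\b}$.

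First I would apply the graded Leibniz rule (\ref{2.1--5}) to each wedge, taking care that the sign factor $(-1)^k$ is governed by the degree of the left factor. In the first summand the left factor is ${\cal F}$ (a $p$-form), producing $(e_\a\rfloor{\cal F})\wedge(e_\b\rfloor\diamond{\cal F}) + (-1)^p{\cal F}\wedge(e_\a\rfloor e_\b\rfloor\diamond{\cal F})$. In the second summand the left factor is $e_\b\rfloor{\cal F}$ (a $(p-1)$-form), producing $(e_\a\rfloor e_\b\rfloor{\cal F})\wedge\diamond{\cal F} + (-1)^{p-1}(e_\b\rfloor{\cal F})\wedge(e_\a\rfloor\diamond{\cal F})$.

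Next I would contract with $\eta^{\a\b}$. By the anticommutativity (\ref{2.1--4}) the operator $e_\a\rfloor e_\b\rfloor$ is antisymmetric in $\a,\b$, while $\eta^{\a\b}$ is symmetric, so the two ``double interior product'' pieces drop out. The two surviving contributions are $(-1)^p\eta^{\a\b}(e_\a\rfloor{\cal F})\wedge(e_\b\rfloor\diamond{\cal F})$ and $-(-1)^{p-1}\eta^{\a\b}(e_\b\rfloor{\cal F})\wedge(e_\a\rfloor\diamond{\cal F})$. Relabeling $\a\leftrightarrow\b$ in the second term, which is allowed because $\eta^{\a\b}=\eta^{\b\a}$, and using $-(-1)^{p-1}=(-1)^p$, both contributions become identical. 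Summing and dividing by two yields (\ref{W}).

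The only real obstacle is bookkeeping: three separate signs interact here—the Leibniz factor $(-1)^k$ whose value differs by one between the two summands of $\Sigma_\b$, the explicit $(-1)^p$ in front of the first summand of (\ref{3.1--9}), and the relative minus sign between the two summands. Once these are tallied correctly, the identity follows by the standard ``symmetric contracted with antisymmetric vanishes'' trick applied to $\eta^{\a\b}e_\a\rfloor e_\b\rfloor$, with no further geometric input required.
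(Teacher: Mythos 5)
Your proposal is correct and follows essentially the same route as the paper's own proof: substitute the explicit formula for $\Sigma_\b$ into ${\cal W}=\eta^{\a\b}e_\a\rfloor\Sigma_\b$, apply the graded Leibniz rule (\ref{2.1--5}), discard the double interior-product terms because the antisymmetric operator $e_\a\rfloor e_\b\rfloor$ is contracted with the symmetric $\eta^{\a\b}$, and relabel indices to merge the two surviving terms. Your write-up actually makes the sign bookkeeping more explicit than the paper, which compresses these steps into a single appeal to identity (\ref{2.1--4}).
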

\begin{proof}
Using (\ref{3.1--9}) we have
  \begin{equation}
    {\cal W}=\frac 12\eta^{\a\b} e_\a\rfloor  \Big((-1)^p{\cal F}\wedge (e_\b\rfloor \diamond{\cal F})-(e_\b\rfloor {\cal F})\wedge\diamond{\cal F}\Big)
  \end{equation} 
Due to identity (\ref{2.1--4}) we are left with 
 \begin{equation}
    {\cal W}=\frac 12\eta^{\a\b}  \Big((-1)^p{(e_\a\rfloor \cal F})\wedge (e_\b\rfloor \diamond{\cal F})-(-1)^{p-1}(e_\b\rfloor {\cal F})\wedge (e_\a\rfloor \diamond{\cal F})\Big)
  \end{equation} 
that is equivalent to (\ref{W}).

\end{proof}

\section{Applications}
In this part, we will look at certain applications of the formalism described above. We will be dealing with viable physics models in the ordinary fourth-dimensional space-time. Our aim is to discuss different types of the dual map. In all models, the energy-momentum current follows from the same expression (\ref{3.1--9}).  In the first pair of examples, we are dealing with the well-known  models in vacuum. In this case, the dual map is proportional to the ordinary Hodge map.  Particularly, in the tangent space, we employ the conventional version of the Minkowski metric $\eta_{\a\b}={\rm diag}(+1,-1,-1,-1)$. 
In the second pair of examples, we consider premetric models of electromagnetism and gravity. 
\subsection{Scalar field}
For a complex scalar field $\varphi $ on a 4-dimensional manifold, the Lagrangian 4-form can be expressed using the differential forms formalism. We assume  
\begin{equation}\label{4.1}
    {\cal L}=\frac 12 d\varphi\wedge *\,\overline{d\varphi}+\frac 12m^2\varphi*\,\overline{\varphi},
\end{equation}
where the star $*$ denotes Hodge's operator while the bar states for the complex conjugate. Both of these two operators are defined for differential forms of arbitrary order. 
The only forms for which we must demand their definition in (\ref{4.1}) are those of the order $p=0,1$. 
Two terms in (\ref{4.1}) can be regarded as generalized Maxwell-type Lagrangians (\ref{sum-act}). The generalized Hodge map can be constructed in relation to the general scheme as a product of Hodge's dual and complex conjugate operators, $\diamond w=*\,\overline w$.

To illustrate how the Lagrangian 4-form (\ref{4.1})  relates to the standard tensor representation, we can express the field strength in a non-holonomic and a holonomic (coordinate) basis, respectively, as follows:
\begin{equation}
    d\vp=\vp_\mu\vt^\mu=\vp_{,i}dx^i\qquad {\rm with}\qquad \vp_{,i}=\frac{\partial \vp}{\partial x^i}=\vp_\mu\vt^\mu_i.
\end{equation}
The coframe $\vt^\mu$ is assumed to be  pseudo-orthonormal with respect to the metric tensor $ g_{ij}=\eta_{\mu\nu}\vt^\mu_i\vt^\nu_j$.

The Lagrangian 4-form is then expressed in standard form ${\cal L}=L{\rm vol}$, with
\begin{equation}\label{4.1x}
    L=\frac 12 \eta^{\mu\nu}\vp_\mu\overline{\varphi}_\nu+\frac 12 m^2\varphi\,\overline{\varphi}=\frac 12g^{ij} \vp_{,i}\overline{\varphi}_{,j} +\frac 12 m^2\varphi\,\overline{\varphi}.
\end{equation}
This Lagrangian scalar can only be used to derive the energy-momentum tensor when the factor $\sqrt{-g}$ is included in the equation.
However, we will move forward with the differential form variation that was previously described.
The field equations (\ref{3.1--8}) takes the form 
\begin{equation}\label{4.1--1}
    d*d\overline{\varphi}=m^2*\overline{\varphi}.
\end{equation}
This equation is applicable also on a curved manifold. 

Let us turn to the energy-momentum tensor.  For the first term in (\ref{4.1}) we can identify the field strength as  ${\cal F}=d\vp$ of the order $p=1$. We choose the constitutive map acted on 1-form ${\cal F}$ as $\diamond {\cal F}=-*\overline{{\cal F}}$. Then the Lagrangian takes the canonical form ${\cal L}=-(1/2){\cal F}\wedge \diamond {\cal F}$ and the general expressions (\ref{3.1--9},\ref{3.1--10},\ref{3.1--10x}) for the energy-momentum current are applicable. We have the first part of the energy-momentum current (\ref{sum-energy}
\begin{equation}\label{3.1--9scalar}
    {}^{(1)} \Sigma_\a=\frac 12\Big({ d\vp}\wedge (e_\a\rfloor *d\overline \vp)+(e_\a\rfloor {d\vp})\wedge*d\overline{\vp}\Big).
\end{equation}
In components, this expression takes the form
\begin{equation}\label{4.1--2y}
    {}^{(1)} \Sigma_\a=\frac 12 \vp_\mu\overline \vp_\nu\left(2\d^\mu_\a\eta^{\b\nu}-\d^\b_\a\eta^{\mu\nu}\right)e_\b\rfloor{\rm vol},
\end{equation}
 where the the standard energy-momentum expression for the scalar field is visible.  
 
  Even the second term in (\ref{4.1}) can be formally brought into the canonical form  ${\cal L}=-(1/2){\cal F}\wedge \diamond {\cal F}$. For this we can choose the zeroth-order``strength"  ${\cal F}=\vp$ and the dual map $\diamond {\cal F}=*\overline{{\cal F}}$. Thus we can use the energy-momentum current expression (\ref{sum-energy}) once more. 
We have 
 \begin{equation}\label{4.1--2x}
{}^{(2)} \Sigma_\a=
   \frac 12 m^2\varphi\overline{\varphi}\,(e_\a\rfloor {\rm{vol}}).
\end{equation}
The total energy-momentum current is expressed as a sum 
$\Sigma_\a={}^{(1)} \Sigma_\a+{}^{(2)} \Sigma_\a$.

The trace of the the energy-momentum current is determined by the product $\vt^\a\wedge \Sigma_\a$. We obtain  
\begin{equation}
    \vt^\a\wedge \Sigma_\a=
 ( -\eta^{\mu\nu}\vp_\mu\overline \vp_\nu+2 m^2\varphi\overline{\varphi})\,{\rm{vol}}.
\end{equation}

Note, that the definition of  Lagrangian (\ref{4.1}), the field equation (\ref{4.1--1}), and the energy-momentum currents  (\ref{4.1--2y}), (\ref{4.1--2x}) do  not require the  $\diamond$-operator to be defined on forms of order $p\ne 0,1$.

\subsection{Electromagnetism in vacuum}
The electromagnetic field in vacuum is described on a 4-dimensional manifold by a Lagrangian of the type (\ref{3.1--2}) with the general linear operator proportional to the Hodge dual operator $\diamond=*$.  

The field strength ${\cal F}$ is an untwisted 2-form. It is assumed to be exact, i.e., ${\cal F}$ is expanded in term of  the potential 1-form ${\cal A}$ such that 
\begin{equation}
    {\cal F}=d{\cal A}.
\end{equation}   Using the phenomenological source---twisted 3-form of the electric current ${\cal J}$, we express a Lagrangian 4-form as
\begin{equation}
    {\cal L}=-\frac 12{\cal F}\wedge*{\cal F}-{\cal A}\wedge {\cal J}
\end{equation}
In this setting, the potential 1-form ${\cal A}$ is assumed to be a dynamical variable, the coframe $\vt^\a$ appearing implicitly in the definition of Hodge's map is a semi-dynamical variable, while the 3-form source ${\cal J}$ is  non-dynamical variable. 
Applying variations of the Lagrangian with respect to the potential  ${\cal A}$ and to the coframe field $\vt^\a$, we obtain the field equation 
\begin{equation}
    d*{\cal F}={\cal J}
\end{equation}
and the energy-momentum current
\begin{equation}
    \Sigma_\a=\frac 12\Big((e_a\rfloor {\cal F})\wedge*{\cal F} -
   (e_a\rfloor {\cal *F})\wedge{\cal F}\Big)
\end{equation}
Since the strength ${\cal F}$ acting on 4-dimensional manifold, this current is traceless, due to Proposition 4. It is also symmetric---the equation ${\cal W}=0$ can be proved using the algebraic properties of the Hodge map, see \cite{Itin:2001xz}.

Substituting into Eq.(\ref{3.1--10x}) the basis expressions 
\begin{equation}\label{3.1--3}
{\cal F}=
\frac 12 F_{\mu\nu}\vt^\mu\wedge\vt^\nu,\qquad  
\qquad  \diamond{\cal F}=\frac 12 H^{\rho\sigma}
(e_\rho\rfloor e_\sigma\rfloor {\rm vol}
\end{equation}
we obtain the familiar expression of the energy-momentum tensor 
\begin{equation}
    T_\a{}^\b =L\d_\a^\b-\frac 12  F_{\a\mu}H^{\mu\b}
\end{equation}
Note that this expression is  valid not only in the standard Maxwell's  electrodynamics on a flat space but also on a curved manifold.

\subsection{Premetric electromagnetism}
Premetric electromagnetism model is described by two second-order differential forms: an untwisted form ${\cal F}$ of the {\it  field strength} and  a twisted form ${\cal H}$ of the {\it electromagnetic excitation}. Both forms are defined on a differential 4-dimensional manifold without a prescribed metric. The form ${\cal F}$ is assumed to be exact, i.e., an untwisted 1-form ${\cal A}$ of {\it potential} is defined 
\begin{equation}
    {\cal F}=d{\cal A}. 
\end{equation}
The Lagrangian is assumed in the standard Maxwell-type form
\begin{equation}
    {\cal L}=\frac 12{\cal F}\wedge{\cal H}-{\cal A}\wedge {\cal J}. 
\end{equation}
Here ${\cal J}$ is a twisted 3-form of {\it electric current}. It is  a non-dynamical phenomenological quantity that is not included in the variation procedure. 

Two 2-order differential forms ${\cal F}$ and ${\cal H}$ are not independent. In a wide range of field strength, they are related by linear {\it constitutive relation}
\begin{equation}\label{cons-rel}
    {\cal H}=\kappa {\cal F},
\end{equation}
where $\kappa$ is a linear operator. It is assumed to satisfy the defining properties of the generalized Hodge map listed in Definition 3. Note that $\kappa$ is required to  be defined on the 2-forms only. The commutative formula (\ref{2.3--5}) holds even in this restricted case. 

To express the operator $\kappa$ explicitly, we expand the 2-forms  in the coframe basis 
\begin{equation}
    {\cal F}=\frac 12 F_{\a\b}\vt^\a\wedge \vt^\b,\qquad 
    {\cal H}=\frac 12 H_{\a\b}\vt^\a\wedge \vt^\b.
\end{equation}
In this description, the $F_{\a\b}$ are components of an ordinary tensor, while $H_{\a\b}$ are components of a pseudo-tensor. 
The operator $\kappa$ is defined as a pseudo-tensor related the components of the field strength to the components of excitation form 
\begin{equation}
    H_{\a\b}=\frac 12 \kappa_{\a\b}{}^{\g\d}F_{\g\d}\,.
\end{equation}
Both forms are assumed to be related to the  same basis. 

Since $\kappa_{\a\b}{}^{\g\d}$ is skew-symmetric in two pairs of its indices it is convenient to deal with an equivalent pseudo-tensor 
\begin{equation}
    \chi^{\a\b\g\d}=\kappa_{\mu\nu}{}^{\g\d}\varepsilon^{\a\b\mu\nu}
\end{equation}
Here $\varepsilon^{\a\b\mu\nu}$ is a permutation pseudo-tensor.
Observe the symmetry relations
\begin{equation}
    \chi^{\a\b\g\d}=-\chi^{\b\a\g\d}=-\chi^{\a\b\d\g}\,.
\end{equation}
Consequently the pseudo-tensor $\chi^{\a\b\g\d}$ has 36 independent components. It can be irreducibly decomposed \cite{Hehl-Obukhov} into a sum of three independent sub-tensors: the principal  part of 20 components, the skewon part of 15 components, and the axion part of 1 component. The skewon part does not contribute to the Lagrangian, while the axion part does not contribute in Hilbert's energy-momentum tensor \cite{Hehl-Obukhov}. Thus we restrict ourselves  to the principle part of the constitutive tensor of 20 independent components. Note that all these components are observable in solid-state physics. As a result the constitutive tensor is assumed to satisfy the additional symmetry relations
\begin{equation}
    \chi^{\a\b\g\d}=\chi^{\g\d\a\b},\qquad {\rm and }\qquad \chi^{[\a\b\g\d]}=0\,.
\end{equation}

The constitutive relation (\ref{cons-rel}) can be viewed as a linear map between two spaces of the second-order differential forms.  

Although, the constitutive relation (\ref{cons-rel})  is  defined for the second-order differential forms only, the commutation formula can be applied also in this case. Consequently, we are coming to the field equation (\ref{sum-eqs})  
\begin{equation}
    d{\cal H}=J
\end{equation}
and the energy-momentum current (\ref{sum-energy}) 
\begin{equation}
     \Sigma_\a=\frac 12\Big((e_a\rfloor {\cal F})\wedge\kappa{\cal F} -
   {\cal F}\wedge (e_a\rfloor \kappa{\cal F})\Big).
\end{equation}
In term of the forms ${\cal F}$ and ${\cal H}$, it reads
\begin{equation}
     \Sigma_\a=\frac 12\Big((e_a\rfloor {\cal F})\wedge{\cal H} -
   {\cal F}\wedge (e_a\rfloor {\cal H})\Big).
\end{equation}
This current is traceless due to the general fact given in Proposition 4. 

In \cite{Hehl-Obukhov}, the energy-momentum current expression was postulated as an independent axiom based on an extension of the vacuum formula. Here we derived it from the Lagrangian by the variation relation.

\subsection{Premetric gravity}
The premetric gravity model \cite{Itin:2016nxk},  \cite{Itin:2018dru} is based on two Maxwell-type equations 
\begin{equation}\label{pr-gr-1}
    d{\cal H}_\a=\Sigma_\a\qquad d{\cal F}^\a=0.
\end{equation}
Here ${\cal H}_\a$ is a covector-valued twisted 2-form of {\it gravitational excitation} , while ${\cal F}^\a$ is a vector-valued untwisted 2-form of {\it gravitational field strength}. $\Sigma_\a$ is a 3-form of energy-momentum current. 
In topologically good area, the homogeneous field equation can be resolved in term of potential 
\begin{equation}\label{pr-gr-2}
    d{\cal F}^\a=0 \quad\Longrightarrow\quad {\cal F}^\a=d\vt^\a.
\end{equation}
Since the set of 1-forms $\vt^\a$ is defined up to a total derivative, they can be chosen linearly independent. Then $\vt^\a$ not only a potential but also a reference basis. 
We express the 2-forms an this basis
\begin{equation}\label{pr-gr-3}
    {\cal F}^\a=\frac 12 F^\a{}_{\mu\nu}\vt^\mu\wedge\vt^\nu,\qquad {\cal H}_\a= \frac 12H_\a{}^{\mu\nu}e_\mu\rfloor e_\nu\rfloor{\rm vol}.
\end{equation}
Here $e_\a$ is a frame dual to $\vt^\a$, i.e., $e_\a\rfloor\vt^\b=\d_\a^\b$. 

The tensors $F^\a{}_{\mu\nu}$ and $H_\a{}^{\mu\nu}$ are assumed to be connected by linear homogeneous constitutive relation
\begin{equation}\label{pr-gr-4}
    H_\a{}^{\mu\nu}=\frac 12 \chi_\a{}^{\mu\nu}{}_\b{}^{\rho\sigma}{}F^\b{}_{\rho\sigma} 
\end{equation}
The constitutive tensor $\chi_\a{}^{\mu\nu}{}_\b{}^{\rho\sigma}$ is skew-symmetric in the pairs of its upper indices. 
\begin{equation}\label{pr-gr-5}
    \chi_\a{}^{\mu\nu}{}_\b{}^{\rho\sigma}=-\chi_\a{}^{\nu\mu}{}_\b{}^{\rho\sigma}=-\chi_\a{}^{\mu\nu}{}_\b{}^{\sigma\rho}. 
\end{equation}
Consider a Lagrangian  of the Maxwell-type form
\begin{equation}\label{pr-gr-6}
        {\cal L}=-\frac 12 {\cal F}^\a\wedge{\cal H}_\a +{}^{\rm (mat)}{\cal L}(\psi,d\psi,\vt^\a),
\end{equation}
where $\psi$ represents the matter field. 

Since the Lagrangian satisfied the condition of the general Maxwell-type action, 
the field equation takes the form (\ref{sum-eqs}) 
\begin{equation}\label{prem-eq}
    d{\cal H_\a}={}^{\rm (gr)}\Sigma_\a +{}^{\rm (mat)}\Sigma_\a. 
\end{equation}
Here the matter energy-momentum current is defined as 
\begin{equation}
    {}^{\rm (mat)}\Sigma_\a=
    \frac\d{\d\vt^\a}{ {}^{\rm (mat)}{\cal L}(\psi,d\psi,\vt^\a)}, 
\end{equation}
while the gravity-energy-momentum current is given in the form(\ref{sum-energy})
\begin{equation}
    {}^{\rm (gr)}\Sigma_\a=\frac 12\Big((e_a\rfloor {\cal F^\b})\wedge{\cal H}_\b -
   {\cal F^\b}\wedge (e_a\rfloor {\cal H}_\b)\Big).
\end{equation}
When a constitutive pseudotensor $\chi_\a{}^{\mu\nu}{}_\b{}^{\rho\sigma}$ is restricted to a special metric form \cite{Itin:2018dru} and Eq.(\ref{prem-eq}) is rewritten as 
\begin{equation}
    d{\cal H_\a}-{}^{\rm (gr)}\Sigma_\a ={}^{\rm (mat)}\Sigma_\a. 
\end{equation}
it turns to be equivalent to the standard Einstein's equation.  
\section{Conclusion}
In this paper, we are looking for an extension of Hilbert's definition of the energy-momentum tensor to  premetric field models. In this case, the metric tensor is replaced by a general constitutive law that relates the basic fields in the model. We explain how  the variation procedure can be applied on fields presented by differential form instead of separated components. In this formalism, the commutative relation for the mapped forms is derived. Then we express the Lagrangian density in term of Hodge-type dual map and derive the corresponding energy-momentum current (twisted vector-valued 3-form). This expression turns out to be straightforwardly related to the commutative relation. The applications of the energy-momentum current formula to specific field models yields the expressions appearing in literature. In the case of a metric constitutive map, these formulas was derived from a Lagrangian. For pure premetric models,  such formulas was previously postulated. 

The applications of the current results can be used in more complicated field models such as non-local electrodynamics \cite{Mashhoon}, non-local gravity \cite{Hehl-Mashhoon}, non-linear electrodynamics  \cite{Obukhov:2002xa}, and Finsler modified electrodynamics \cite{Itin:2014uia}.

\end{document}